
\documentclass[conference,a4paper, final]{IEEEtran}
\usepackage{enumitem}

\usepackage[footnote,marginclue,nomargin,final]{fixme}

\usepackage{comment}
\usepackage{graphics}
\usepackage{float}				
\usepackage{tikz}				
\usepackage{amssymb}
\usepackage{bm}					
\usepackage{sistyle}			
\SIproductsign{\cdot}
\usepackage{bbm}
\usepackage[cmex10]{amsmath}
\usepackage{algorithmic}
\usepackage[tight,footnotesize]{subfigure}
\usepackage{stfloats}
\hyphenation{op-tical net-works semi-conduc-tor}
\usepackage[utf8]{inputenc}	

\usepackage{algorithm}
\usepackage{amsthm}

\newtheorem{theorem}{Theorem}
\newtheorem{proposition}{Proposition}
\theoremstyle{remark}

\providecommand{\indi}[1]{\mathbbm{1}\left\{#1\right\}}

\newcommand{\E}[1]{\mathbb{E}\left[#1\right]}
\newcommand{\EE}[2]{\mathbb{E}_{#1}\left[#2\right]}

\newcommand{\pr}[1]{\text{Pr}\left[#1\right]}
\newcommand{\mbf}[1]{\mathbf{#1}}

\newcommand{\unit}[1]{\ 
	\ifmmode & \left[\textup{#1}\right] \hspace{1cm}
	\else \hfill $\left[\textup{#1}\right]$ \hspace{1cm}
	\fi
}

\makeatletter
 \begingroup
 \catcode`\_=\active
 \protected\gdef_{\@ifnextchar|\subtextup\sb}
 \endgroup
 \def\subtextup|#1|{\sb{\textup{#1}}}
 \AtBeginDocument{\catcode`\_=12 \mathcode`\_=32768 }
\makeatother
\begin{document}

\sloppy

\title{Block-Fading Channels with Delayed CSIT at Finite Blocklength}

\author{
  \IEEEauthorblockN{Kasper F. Trillingsgaard and
    Petar Popovski}
  \IEEEauthorblockA{Department of Electronic Systems, Aalborg University,
    Aalborg, Denmark}

}



\maketitle

\begin{abstract} In many wireless systems, the channel state information at the
  transmitter (CSIT) can not be learned until \textit{after} a transmission has
  taken place and is thereby outdated. In this paper, we study the benefits of
  delayed CSIT on a block-fading channel at finite blocklength. First, the
  achievable rates of a family of codes that allows the number of codewords to
  expand during transmission, based on delayed CSIT, are characterized. A
  fixed-length and a variable-length characterization of the rates are provided
  using the dependency testing bound and the variable-length setting introduced
  by Polyanskiy et al. Next, a communication protocol based on codes with
  expandable message space is put forth, and numerically, it is shown that
  higher rates are achievable compared to coding strategies that do not benefit
  from delayed CSIT.
\end{abstract}

\section{Introduction}
The success of wireless high-speed networks is largely based on
reliable transmission of \emph{large} data packets through the use of the
principles from coding and information theory. On the other hand, many emerging
applications that involve machine-to-machine (M2M) communication rely on
transmission of very short data packets with strict deadlines, where the
asymptotic information-theoretic results are not applicable. The fundamentals of
such a communication regime have recently been addressed  in~\cite{finite_rate}, where it was shown that the rates achievable by
fixed-length block codes in traditional point-to-point communication can be
tightly approximated by
\begin{align}
  R^*(n,\epsilon) &= C - \sqrt{\frac{V}{n}}Q^{-1}(\epsilon) +
  \mathcal{O}\left(\frac{\log n}{n}\right),\label{eq:normal_approximation}
\end{align}
where $C$ is the Shannon capacity, $V$ is the channel dispersion, $n$ is the
blocklength, $\epsilon$ is the desired probability of error and $Q^{-1}(\cdot)$
the inverse of the standard Q-function.
In \cite{feedback} it was shown that allowing the use of variable-length
stop-feedback (VLSF) coding improves the achievable rates dramatically, seen through the fact 
that the dispersion term in
\eqref{eq:normal_approximation} vanishes.

Finite blocklength analysis is particular interesting for fading
channels. Whereas the effect of fading may often be averaged when
blocklengths tend to infinity, fading may have severe impact on the achievable
rates when blocklengths are small, i.e. as the blocklength decreases and/or the
coherence time of the block-fading channel increases, the worst-case channel
conditions largely dictate the achievable rates
\cite{quasi_static,scalar_dispersion,block_fading}. In such cases it is
beneficial to use variable-length coding and allow a transmission
that experiences good channel realization to terminate early. 
However, sending an ACK/NACK  at an arbitrary instant is rather impractical. From a system design
perspective, it is viable to assume that a feedback opportunity occurs regularly
after each $T-$th channel use, through which the sender gets either ACK or NACK,
along with the delayed CSIT about the transmission conditions in the block. This
deteriorates the benefits of VLSF, as
the sender may continue to send incremental redundancy to the receiver until the
next feedback opportunity occurs. The problem is circumvented by the concept of
\textit{backtrack retransmission} (BRQ) \cite{backtrack}, described as
follows. Upon receiving delayed CSIT and NACK, the sender estimates how much side
information is required by the receiver to decode the packet, 
erroneously received in the previous block. If the side information is
less than the total number of source bits that can be sent in the next block,
new source bits are appended to the side information and then jointly
channel-coded. Thus, the sender \emph{expands} the original message by appending
new source bits, before the original message has been decoded.

This paper generalizes the concept of BRQ to the case of finite blocklength. We
consider a block-fading channel with two states, where the receiver has full
channel state information (CSI) and the transmitter learns the CSI after
transmission in each block. We introduce a family of codes, termed \emph{expandable message
 space (EMS) codes}, that allows the message space to expand upon reception of
a delayed CSIT. The EMS codes allow the transmitter to expand the
number of codewords in a tree-like manner. Using these codes, we propose a
communication scheme, based on BRQ, which improves the achievable rates
by expanding the message space according to the delayed CSIT.

We illustrate the concept of backtrack retransmission and EMS codes for
block-fading through the following example.  Consider a block-fading channel in
which a transmission is allowed to take at most two blocks of $T$ channel
uses. Using fixed-length block codes, the transmitter may either send a packet
of $b_1$ source bits in one block with a target probability of error $\epsilon$,
or it may transmit a packet in both blocks of $b_2>2b_1$ source bits, i.e. a
higher rate, with the same target error probability $\epsilon$ at the cost of
twice the blocklength. A na\"{i}ve variable-length coding can be applied as
follows: the transmitter sends a packet in the first block of $b_{1v}>b_1$
source bits and obtains the delayed CSIT of the first block after
transmission. If the CSIT allows decoding with a probability of error less than
$\epsilon$, the packet is decoded and otherwise incremental redundancy is
transmitted in the second block, leading to half the rate,
$\frac{b_{1v}}{2T}$. $b_{1v}$ is chosen to match the target error probability
$\epsilon$. In this paper we aim beyond this na\"{i}ve scheme and investigate
how to use the second (and the subsequent) blocks in the best possible way given
that a transmission has already taken place in the first block. Using the EMS
codes, which are introduced in this paper, the transmitter may send a packet of
$b_{1e}$ source bits in the first block, and if the CSI does not allow reliable
decoding, the transmitter may combine incremental redundancy and a new message
of $b_{2e}$ source bits in the second block. $b_{1e}$ and $b_{2e}$ can thereby
be jointly optimized to obtain the highest rate under the constraint that the
probability of error is smaller than $\epsilon$.


In contrast to traditional variable-length codes with stop-feedback as analyzed
in \cite{feedback}, neither the amount of source information to be transmitted
nor the blocklength is known in advance for our communication scheme. Clearly,
this implies some practical issues in the higher communication layers but also
provides improved achievable rates.

\begin{figure}[!t]
  \centering
  \includegraphics[width=2.5in]{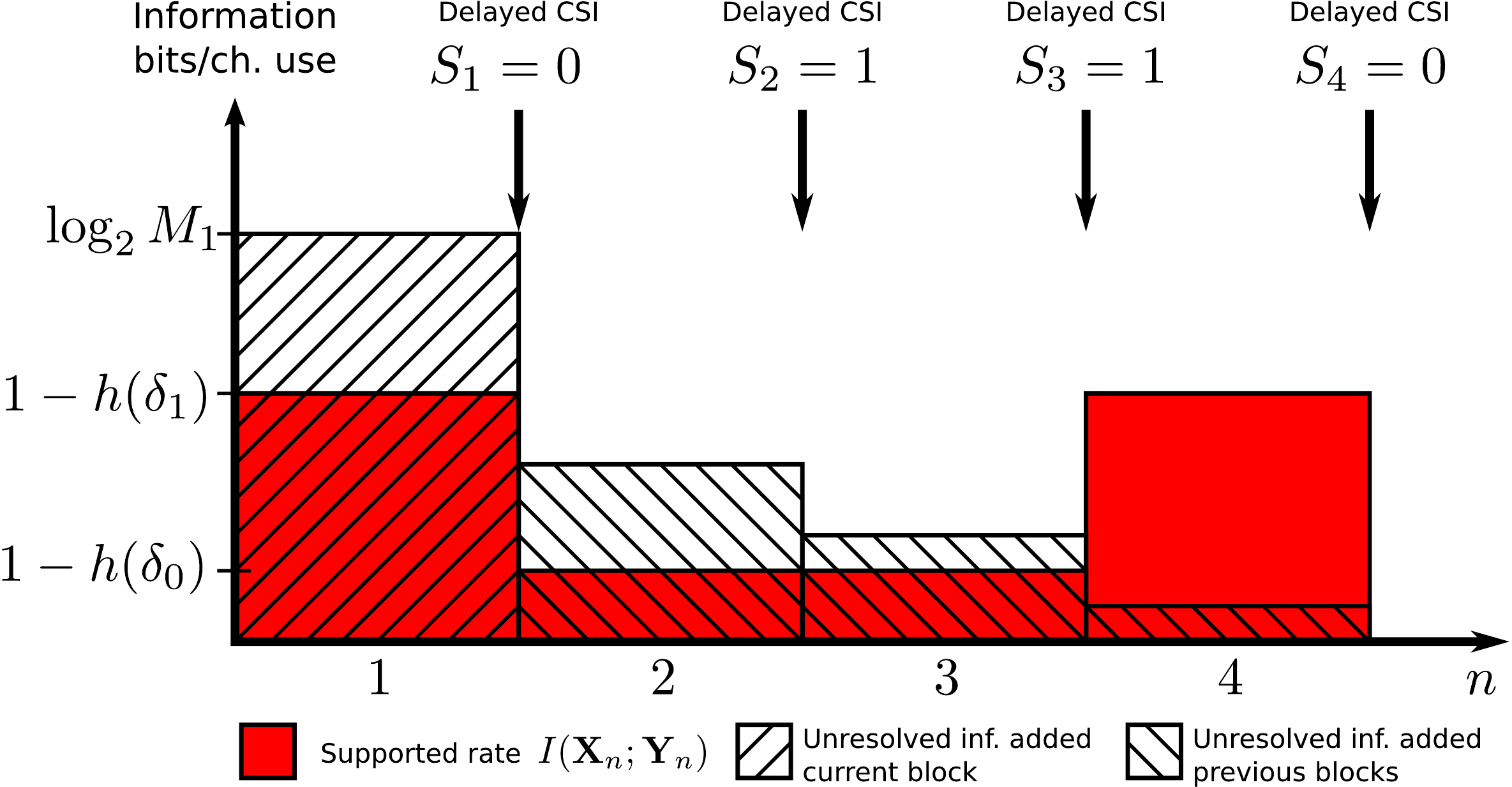}
  \caption{Variable-length transmission scheme based on delayed CSIT.}
  \label{fig:brq_comp}
\end{figure}

\section{The Block-Fading model}
\label{sec:channel_model}
We consider a single-user binary channel with block-fading. The channel has two
states in which it acts as binary symmetric channels (BSC) with different
crossover probabilities. Transmissions to the receiver are done in blocks, each
consisting of $T$ channel uses, where $T$ models the coherence time of the
block-fading channel. The blocks are enumerated $n\in \mathbb{N}$, where
$\mathbb{N}$ is the set of natural numbers, and the channel input in the $n$-th
block is denoted $\mathbf{X}_n\in\mathcal{X}=\{0,1\}^T$. The binary channel
state in the $n$-th block, denoted by $S_n\in\{0,1\}$, is a random variable that
is independent of any previous channel states and distributed according to
$\pr{S_n=1}=1-\pr{S_n=0}=q$, $q\in[0,1]$. In state $s\in\{0,1\}$, the channel
acts as a BSC with crossover probability $\delta_s\in[0,\frac{1}{2}]$ such that
receiver obtains $\mbf{Y}_n = \mbf{X}_n \oplus \mbf{Z}_n\in
\mathcal{Y}=\{0,1\}^T$ where $\oplus$ denotes the XOR operation and
$\mbf{Z}_n\in\{0,1\}^T$ is a binary noise vector with iid entries distributed as
$\text{Bern}(\delta_s)$. For the remaining part of this paper, we assume,
without loss of generality, that $\delta_1<\delta_0$. Note that the state with
$\delta_0$ can also be used to model a block with intermittent interference. The
receiver knows $S_n$ when $\mbf{X}_n$ is sent, but the transmitter learns it after it
is sent, at the end of the $n$-th block. Thus the channel state $S_n$ may be
used to adapt the transmission scheme from the $(n+1)$-th block. Moreover, we
consider a \emph{stop-feedback setting} in which the receiver feeds back
noiseless ACK/NACK to the transmitter that indicates termination.  The capacity
of the described channel is $C = 1 - q h_b(\delta_1) - (1-q) h_b(\delta_0)$,
where $h_b(\cdot)$ is  the binary entropy function. The capacity is 
approached by fixed-length codes, with blocklengths tending to infinity. As shown in
Section~\ref{sec:results}, binary fading markedly degrades the performance at
finite blocklength.


\subsection{Fixed-length block codes}
The achievable rates using fixed-length block codes may be approximated by
\eqref{eq:normal_approximation}, where the dispersion of the block-fading model
described (per channel use) can be shown to be
\cite{scalar_dispersion}
\begin{align}
  V&=\E{ \delta_S (1-\delta_S) \left(\log
      \frac{1-\delta_S}{\delta_S}\right)^2}\nonumber\\
  &\quad+ T q(1-q)(h_b(\delta_0)-h_b(\delta_1))^2,\label{eq:dispersion3}
\end{align}
with the expectation taken over the channel state $S\in\{0,1\}$.

\subsection{Variable-length codes}
For fading channels with CSI at the receiver
(CSIR) and delayed CSIT, variable-length coding can be achieved either through
stop-feedback as in \cite[Theorem 3]{feedback} or delayed CSIT. With
stop-feedback, the transmitter sends incremental redundancy until an ACK
is obtained, while the receiver makes an estimate of the correct codeword in
each block and if the reliability of the estimate is higher than $1-\epsilon$,
the receiver feeds back an ACK that terminates the transmission. This
scheme is referred to as variable-length stop-feedback (VLSF) coding. On the other hand, delayed CSIT can be used by the transmitter to estimate
whether the receiver has collected enough information density and thereby
terminate the transmission. This communication scheme is referred to as
variable-length coding with delayed CSIT (VLD).

The achievable rates of these communication schemes can be computed using the
Theorem~3 in \cite{feedback} and the dependency testing bound in
\cite{finite_rate}, respectively. In Fig.~\ref{fig:brq_comp}, the VLD scheme is
illustrated for the block-fading model described previously. Initially, the
transmitter chooses a codeword from a codebook of $M_1$ codewords. In each
block, the receiver collects information density that resolves some uncertainty
about the correct codeword. After the transmission, the transmitter
obtains the CSI of the previous block, and using the dependency testing bound in
\cite{finite_rate}, computes the probability of error $\hat{\epsilon}$. If $\hat{\epsilon}<\epsilon$, the transmitter terminates the
transmission, while it sends additional incremental redundancy
otherwise. This continues until the amount of information density allows the receiver to reliably decode the message. As
shown in the specific realization in Fig.~\ref{fig:brq_comp}, variable-length
coding with periodic feedback eventually leads to cases where the receiver
collects a wasteful amount of information density.

\section{Expandable Message Space Codes}\label{sec:codes_exp}
\begin{figure}[!t]
\centering
\subfigure[Tree of codewords.]{
  \includegraphics[width=1.5in]{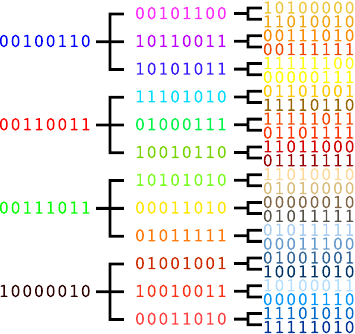}
	\label{fig:codetree}
}
\subfigure[List of codewords.]{
  \includegraphics[width=1.46in]{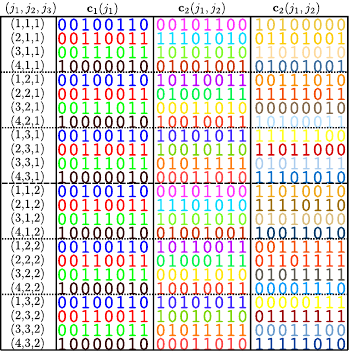}
	\label{fig:codelist}
}
\caption{Depicts a random binary tree codebook and the corresponding list of
  codewords where $\mathcal{X}=\{0,1\}^8$  and message cardinalities $M_1=4$, $M_2=3$ and
  $M_3=2$.  Note that equal blocks have the same colors. }
\label{fig:code_design}
\vspace{-12pt}
\end{figure}
This section describes the EMS codes and EMS stop-feedback (EMS-SF) codes. In
contrast to fixed-length block codes, EMS codes allow the number of codewords in
each block to expand in a tree-like fashion. Two types of codes are introduced
which allow the message space to expand in each block and are analogous to
fixed-length block codes and VLSF codes in \cite{feedback}, respectively. In the
following, $M_i^k$ denotes $\prod_{n=i}^k M_n$ for $k\geq i$ and $1$ otherwise.
An $( (M_1,\ldots,M_N),  \epsilon)$ EMS code consists of 
\begin{itemize}
\item $N$ message sets,
$\mathcal{M}_n=\{1,\ldots, M_n\}$, $n\in\{1,\ldots,N\}$,
\item a set of encoding functions $\mathbf{c}_n:
\mathcal{M}_1\times\ldots \times \mathcal{M}_n \rightarrow \mathcal{X}$,
\item a decoder function $\mbf{g}: \mathcal{Y}^N \rightarrow \mathcal{M}_1\times\ldots
\times \mathcal{M}_N \cup \{\mathrm{e}\} $ that assigns estimates $\hat j_1,
\ldots,\hat j_N$ or an error message $\mathrm{e}$ to each received sequence
$\mathbf{y}^N$,
\end{itemize}
s.t. the average error probability $\pr{ \mbf{g}(\mbf{Y}^N)\not=
  (J_1,\ldots,J_N) } \leq \epsilon$, where
$J_1\in\mathcal{M}_1,\ldots,J_N\in\mathcal{M}_N$ denote the transmitted
equiprobable messages.

In particularly, we denote a codeword of an EMS code as
$\mbf{c}(j_1,\ldots,j_N)\in\mathcal{X}^N$, with $(j_1,\ldots,
j_N)\in(\mathcal{M}_1 ,\ldots, \mathcal{M}_N)$, and
$\mbf{c}_i^k(j_1,\ldots,j_N)\in\mathcal{X}^{k-i+1}$, $k\geq i$, denote the
$i$-th to the $k$-th block of $\mbf{c}(j_1,\ldots,j_N)$. As opposed to a
traditional fixed-length block code with $\prod_{i=1}^N M_i$ messages, the EMS
codes differ only by the definition of the encoder functions that restricts the
$n$-th block to only depend on the messages $M_1,\ldots,M_n$. This property
implies that the codewords of an EMS code have a tree-like overlapping structure
such that
\begin{align}
  \mbf{c}_1^n(j_1,\ldots,j_n,j_{n+1},\ldots,j_N)=\mbf{c}_1^n(j_1,\ldots,j_n),
\end{align}
for $(j_{n+1},\ldots,j_N)\in(\mathcal{M}_{n+1},\ldots,\mathcal{M}_N)$, and hence we can uniquely denote $\mbf{c}_1^n(j_1,\ldots,j_N)$ by
$\mbf{c}(j_1,\ldots,j_n)$.

Next, we define an EMS code that takes
advantage of stop-feedback. An $(l, (M_1,\ldots,M_N), \epsilon)$ EMS-SF code
consists of
\begin{itemize}
\item $N$ message sets $\mathcal{M}_n\in\{1,\ldots,M_n\}$, $n\in\{1,\ldots,N\}$,
\item a sequence of encoders $\mbf{c}^{SF}_n: \mathcal{M}_1 \times \ldots \times
  \mathcal{M}_{n'}\rightarrow \mathcal{X}$, where $n'=\min(n, N)$,
\item a sequence of decoders $\mbf{g}^{SF}_n: \mathcal{Y}^n \rightarrow
  \mathcal{M}_1\times \ldots \times \mathcal{M}_{n'}$, that assigns the best
  estimates $\hat j_1,\ldots,\hat j_{n'}$ at time $n$ for each possible sequence
  in $\mathcal{Y}^n$,
\item a random variable $\tau^*\in\mathbb{N}$ satisfying $\E{\tau^*} \leq l$,
\end{itemize}
s.t. $\pr{\mathbf{g}^{SF}_{\tau^*}(Y^{\tau})\not = (J_1, \ldots, J_{{\tau^*}'})}\leq
\epsilon$, ${\tau^*}'=\min(\tau^*, N)$, where
$J_1\in\mathcal{M}_1,\ldots,J_N\in\mathcal{M}_N$ denotes the equiprobable messages. 

Although $N$ message sets are defined for the EMS-SF codes, the transmission may
be terminated before all messages have been decoded without declaring an
error. When $M_n=1$ for $n\geq 2$, the EMS code and EMS-SF are identical to
traditional a fixed-length block code and a VLSF code with $M_1$ messages,
respectively.  The overlapping property allows EMS and EMS-SF codes to be built
online, based on common randomness, according to feedback or CSI after each
block. A practical EMS example are the rate-compatible convolutional codes in
which new source bits only affect the future states.

To illustrate how EMS codes can be used in variable-length coding on a binary
block-fading channel with $T=8$, consider the example on
Fig.~\ref{fig:code_design}. Assume that from a codebook of $M_1=4$ codewords,
the transmitter initially transmits a codeword with index
$j_1\in\{1,2,3,4\}$. After the transmission, the transmitter obtains delayed CSIT
and finds that the codeword can not be decoded reliably at the receiver. Instead
of sending incremental redundancy, the transmitter chooses to expand the
message space by a factor of $M_2=3$ and injects a new message
$j_2\in\{1,2,3\}$. Fig.~\ref{fig:codetree} depicts how the codebook expands. To
send the second block, the encoder $\mbf{c}_2(j_1,j_2)$ is used, and hence the
second transmitted block depends on both $j_1$ and $j_2$. The dependency on $j_1$
essentially combines the injected message $j_2$ and the incremental redundancy
for the first message $j_1$. Note that if $M_2$ had been $1$, purely incremental
redundancy would have been send. Upon obtaining the CSI of the second block, the
transmitter injects another message $j_3\in\{1,2\}$ using the encoder
$\mbf{c}_{3}(j_1,j_2,j_3)$, and finally the CSI allows reliable decoding. The
codebook of $M_1 M_2 M_3=24$ codewords generated through this process is shown
in Fig~\ref{fig:codelist}. 

For the remaining results, we use random codebooks with iid entries drawn from a
$\text{Bern}(\frac{1}{2})$ distribution.

In order to provide a non-asymptotic characterization of the achievable rates of
the EMS codes, we state the following bound, analogous to the dependency testing bound in \cite{finite_rate}.
\begin{theorem}\label{thm:joint_brq_decode} The error probability for
  $((M_1,\ldots,M_N), \epsilon)$ EMS code is bounded as
\begin{align}
  \epsilon &\leq \pr{ i(\mbf{X}_1^N ; \mbf{Y}_1^N) \leq \log \frac{M_1^N-1}{2} }\nonumber\\
  &\quad+ \sum_{n=1}^{N}  \frac{ M_{n+1}^N (M_{n}-1)}{2}
  \nonumber\\
  &\qquad\qquad\pr{
    i(\mbf{X}_1^N ; \mbf{Y}_1^{n-1} ,\overline{\mbf{Y}}_{n}^N) >
    \log \frac{M_1^N-1}{2} },\label{eq:thm1_error_bound}
\end{align}  
where $\mbf{X}_i$ is distributed according to the channel input distribution
 and $\mbf{Y}_i$ and
$\overline{ \mbf{Y}}_i$ are distributed according to the output distribution,
conditioned and unconditioned on the channel input, respectively.
\end{theorem}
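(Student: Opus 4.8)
The plan is to prove the bound by random coding over the ensemble of tree-structured codebooks, combined with a threshold decoder, mirroring the derivation of the dependency testing bound of \cite{finite_rate} but organizing the union bound according to where competing messages branch off in the tree. First I would draw the codebook at random: for every tree node $(j_1,\ldots,j_n)$ generate the block $\mbf{c}_n(j_1,\ldots,j_n)$ with iid $\text{Bern}(\tfrac12)$ entries, independently across nodes. The overlapping property then guarantees that two messages whose indices first differ at coordinate $n$ share their first $n-1$ blocks and have mutually independent blocks from block $n$ onward.

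Next I would fix $\gamma=\log\frac{M_1^N-1}{2}$ and analyze the decoder that returns the lowest-indexed message $j$ (under any fixed total order) whose density exceeds the threshold, $i(\mbf{c}(j);\mbf{y}_1^N)>\gamma$, and declares $\mathrm{e}$ otherwise. Conditioned on $j$ being transmitted, an error forces either $i(\mbf{c}(j);\mbf{Y}_1^N)\leq\gamma$ or $i(\mbf{c}(j');\mbf{Y}_1^N)>\gamma$ for some lower-indexed $j'$, so a union bound gives
\[
\pr{\text{err}\mid j}\leq \pr{i(\mbf{c}(j);\mbf{Y}_1^N)\leq\gamma}+\sum_{j'<j}\pr{i(\mbf{c}(j');\mbf{Y}_1^N)>\gamma}.
\]
Because the channel is memoryless across blocks, the density is additive, $i(\mbf{c}(j');\mbf{y}_1^N)=\sum_{k=1}^N i(\mbf{c}_k(j');\mbf{y}_k)$, and for a competitor $j'$ branching off at coordinate $n$ the first $n-1$ summands coincide with the genuine density (shared blocks) while blocks $n,\ldots,N$ of $\mbf{c}(j')$ are independent of $\mbf{Y}_n^N$ and hence contribute the unconditional density. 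Thus $\pr{i(\mbf{c}(j');\mbf{Y}_1^N)>\gamma}$ depends on $j'$ only through $n$ and equals $\pr{i(\mbf{X}_1^N;\mbf{Y}_1^{n-1},\overline{\mbf{Y}}_n^N)>\gamma}$, precisely the mixed density appearing in the statement.

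Finally I would average over the equiprobable message. Over the random ensemble the first term is distributed as $i(\mbf{X}_1^N;\mbf{Y}_1^N)$ irrespective of $j$, producing the leading term of the bound. For the second term I would group the double sum $\sum_j\sum_{j'<j}$ by the branch coordinate $n$: for a given $j$ there are $(M_n-1)M_{n+1}^N$ messages $j'$ first differing from $j$ at coordinate $n$, hence $\tfrac12 M_1^N(M_n-1)M_{n+1}^N$ ordered pairs with $j'<j$, and dividing by $M_1^N$ yields the coefficient $\frac{M_{n+1}^N(M_n-1)}{2}$. This establishes the bound on the ensemble average, whence a codebook attaining it exists.

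The step I expect to be the main obstacle is the structural claim in the third paragraph: carefully justifying that the tail blocks of a competing codeword are statistically independent of the received sequence $\mbf{Y}_n^N$, so that the corresponding information density truly factors into a genuine part over $\mbf{Y}_1^{n-1}$ and an unconditional part over $\overline{\mbf{Y}}_n^N$. Keeping the independence bookkeeping correct across the shared prefix and the divergent suffix, and confirming that the branch-counting delivers exactly the factor $\tfrac12$ inherited from the pairwise ordering, is where the argument must be most careful.
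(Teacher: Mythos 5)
Your proposal is correct and follows essentially the same route as the paper's Appendix~A proof: a random tree-structured codebook, the DT/Feinstein threshold decoder at $\log\frac{M_1^N-1}{2}$, a union bound over lower-indexed competitors grouped by the coordinate $n$ at which they branch off the transmitted path, and codebook symmetry to reduce each pairwise term to the mixed density $i(\mbf{X}_1^N;\mbf{Y}_1^{n-1},\overline{\mbf{Y}}_n^N)$. The only (harmless) difference is the final averaging step: the paper fixes an explicit mixed-radix ordering of the codewords and evaluates floor-function sums to get the coefficients, whereas your symmetric pair-counting argument (half of all ordered pairs branching at coordinate $n$ have $j'<j$, for any fixed total order) yields the same coefficient $\frac{(M_n-1)M_{n+1}^N}{2}$ somewhat more cleanly.
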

\begin{proof}
  See Appendix~A.
\end{proof}
The result in Theorem~\ref{thm:joint_brq_decode} can be equivalently stated as
\begin{proposition}
  The error probability for $((M_1,\ldots,M_N), \epsilon)$-code EMS code is
  \begin{align}
    \epsilon &\leq \sum_{n=1}^{N} \frac{M_{n+1}^N (M_{n}-1)}{M_1^N-1}\nonumber\\
    & \qquad\E{\exp{\left\{ i(\mbf{X}_1^{n-1} ; \mbf{Y}_1^{n-1}) - \left| i(\mbf{X}_1^N ;
          \mbf{Y}_1^N) - \log \frac{M_1^N-1}{2}
        \right|^+\right\}}}.\label{eq:prop1_statement}
  \end{align}
  \label{thm:joint_brq_decode2}
\end{proposition}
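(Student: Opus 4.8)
The plan is to start from the bound in Theorem~\ref{thm:joint_brq_decode} and transform each union-type term by a change of measure, and then reconcile the resulting terms with the threshold term $\pr{i(\X_1^N;\Y_1^N)\le\log\frac{M_1^N-1}{2}}$ so as to assemble the weighted sum of expectations in \eqref{eq:prop1_statement}. Throughout I abbreviate $\gamma=\log\frac{M_1^N-1}{2}$ and exploit that, because the channel is memoryless across blocks and the codebook entries are iid, the information density is additive, $i(\X_1^N;\Y_1^N)=\sum_{k=1}^N i(\X_k;\Y_k)$, and the mixed density splits as $i(\X_1^N;\Y_1^{n-1},\overline{\Y}_n^N)=i(\X_1^{n-1};\Y_1^{n-1})+\sum_{k=n}^N i(\X_k;\overline{\Y}_k)$, with the convention that the empty prefix contributes $i(\X_1^{0};\Y_1^{0})=0$.

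First I would rewrite the $n$-th union term. In $\pr{i(\X_1^N;\Y_1^{n-1},\overline{\Y}_n^N)>\gamma}$ the blocks $1,\dots,n-1$ are drawn from the joint law, whereas blocks $n,\dots,N$ carry an output $\overline{\Y}$ that is independent of the input. Importance sampling from the product law on blocks $n,\dots,N$ back to the conditional law contributes the Radon--Nikodym factor $\e{-\sum_{k=n}^N i(\X_k;\Y_k)}=\e{i(\X_1^{n-1};\Y_1^{n-1})-i(\X_1^N;\Y_1^N)}$, so that
\begin{align}
  &\pr{i(\X_1^N;\Y_1^{n-1},\overline{\Y}_n^N)>\gamma}\nonumber\\
  &\quad=\E{\indi{i(\X_1^N;\Y_1^N)>\gamma}\,\e{i(\X_1^{n-1};\Y_1^{n-1})-i(\X_1^N;\Y_1^N)}},\nonumber
\end{align}
where the remaining expectation is now entirely under the joint law.

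Next I would absorb the combinatorial prefactors. Using $\e{\gamma}=\frac{M_1^N-1}{2}$, the coefficient $\frac{M_{n+1}^N(M_n-1)}{2}$ becomes $\frac{M_{n+1}^N(M_n-1)}{M_1^N-1}\e{\gamma}$, and on the event $\{i(\X_1^N;\Y_1^N)>\gamma\}$ this extra $\e{\gamma}$ is exactly what is needed to write $\e{i(\X_1^{n-1};\Y_1^{n-1})-i(\X_1^N;\Y_1^N)}\e{\gamma}=\e{i(\X_1^{n-1};\Y_1^{n-1})-\left|i(\X_1^N;\Y_1^N)-\gamma\right|^+}$, since there $\left|i-\gamma\right|^+=i-\gamma$. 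Thus each transformed union term already matches the summand of \eqref{eq:prop1_statement} restricted to $\{i>\gamma\}$. On the complementary event $\{i(\X_1^N;\Y_1^N)\le\gamma\}$ the exponent $\left|i-\gamma\right|^+$ vanishes, so there the $n$-th summand of \eqref{eq:prop1_statement} equals $\e{i(\X_1^{n-1};\Y_1^{n-1})}$ with weight $\frac{M_{n+1}^N(M_n-1)}{M_1^N-1}$.

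The remaining work, which I expect to be the main obstacle, is to account for the single threshold term $\pr{i(\X_1^N;\Y_1^N)\le\gamma}$ using exactly these $\{i\le\gamma\}$ contributions. Unlike the single-codeword dependency-testing bound in \cite{finite_rate}, where the prefix density is absent and the threshold term passes through unchanged, here the $N$ prefix densities $i(\X_1^{n-1};\Y_1^{n-1})$ enter with the block-dependent weights $\frac{M_{n+1}^N(M_n-1)}{M_1^N-1}$, and the delicate point is to show that these weighted prefix contributions reproduce the threshold term and hence close the identity with \eqref{eq:prop1_statement}. The two structural facts driving this step are the additivity of the information density noted above and the telescoping identity $\sum_{n=1}^N M_{n+1}^N(M_n-1)=\sum_{n=1}^N\left(M_n^N-M_{n+1}^N\right)=M_1^N-1$, which shows that the weights sum to one (so that for $N=1$ the construction collapses to the ordinary dependency-testing bound) and which governs how the prefix factors accumulate along the tree. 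Verifying that the $\{i\le\gamma\}$ pieces assemble consistently, rather than merely summing to $\pr{i\le\gamma}$ term by term, is where the argument requires the most care.
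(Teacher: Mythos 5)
Your mechanical steps are correct and are exactly the ingredients of the paper's Appendix~B proof: the change of measure on the suffix blocks (which the paper packages as the identity \eqref{eq:prop1_rewrite2} averaged over the suffix law, yielding \eqref{eq:avg_identity}), the absorption of the prefactor via $e^{\gamma}=\frac{M_1^N-1}{2}$, and the telescoping identity $\sum_{n=1}^N M_{n+1}^N(M_n-1)=M_1^N-1$. But your proof stops at the decisive point: you pose, without resolving, how the threshold term $\pr{i(\X_1^N;\Y_1^N)\leq\gamma}$ is reconciled with the $\{i\leq\gamma\}$ pieces of the expectations, so as written the proposal does not establish \eqref{eq:prop1_statement}. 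The paper closes this step in the opposite order from what you attempt: rather than reassembling $\pr{i\leq\gamma}$ from the $\{i\leq\gamma\}$ fragments after transforming the union terms, it first \emph{distributes} the single threshold term over the $N$ summands using the unit-sum weights (giving \eqref{eq:prop1_rewrite}), and then, for each $n$, conditions on the prefix, sets $\rho=i(\X_1^{n-1};\Y_1^{n-1})$, and applies \eqref{eq:avg_identity} to convert each weighted pair of probabilities into the single expectation in \eqref{eq:prop1_statement} in one stroke.

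Your hesitation at this step is, moreover, substantive rather than a mere bookkeeping worry. Your own change-of-measure computation shows that the $\{i\leq\gamma\}$ portion of the $n$-th expectation in \eqref{eq:prop1_statement} is $\E{e^{i(\X_1^{n-1};\Y_1^{n-1})}\indi{i(\X_1^N;\Y_1^N)\leq\gamma}}$, which equals $\pr{i(\X_1^N;\Y_1^N)\leq\gamma}$ only for $n=1$ (empty prefix). Thus the claimed equivalence with Theorem~\ref{thm:joint_brq_decode} is not a term-by-term identity; completing your route would require showing
\begin{align}
\sum_{n=1}^{N}\frac{M_{n+1}^N(M_n-1)}{M_1^N-1}\,\E{e^{i(\X_1^{n-1};\Y_1^{n-1})}\indi{i(\X_1^N;\Y_1^N)\leq\gamma}}\geq\pr{i(\X_1^N;\Y_1^N)\leq\gamma},
\end{align}
which does not follow from the ingredients you have assembled, since $e^{i(\X_1^{n-1};\Y_1^{n-1})}$ can fall below $1$ precisely on the sub-threshold event. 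The paper sidesteps this by \eqref{eq:prop1_rewrite2}, but note that as printed that ``identity'' equates $e^{\rho}\indi{z\leq\gamma'/e^{\rho}}$ with $\indi{z\leq\gamma'/e^{\rho}}$ on the sub-threshold branch, which is exact only for $\rho=0$ and an inequality in the direction needed only when $\rho\geq 0$, whereas $\rho=i(\X_1^{n-1};\Y_1^{n-1})$ may be negative. So the step you flagged as ``where the argument requires the most care'' is exactly where the paper's own derivation is loosest; your attempt is incomplete, but the missing piece is a genuine one that the paper resolves only via \eqref{eq:avg_identity}.
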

\begin{proof}
  See Appendix~B.
\end{proof}

Next, we consider a non-asymptotic bound for EMS-SF codes. This
generalizes the VLSF code from \cite{finite_rate}.
\begin{theorem}
  Fix $\gamma_n$ for $n\in\{1,\ldots,N \}$ and set $\gamma_n=0$ for $n>N$. Let $\mbf{X}_i$ and
  $\overline{ \mbf{X}}_i$ be independent copies of the same process and $\mbf{Y}_i$ be the output
  of the channel when $\mbf{X}_i$ is its input. Define the hitting times
  \begin{align}
    \tau & = \inf\left\{ k \geq 1 : i(\mbf{X}_1^k;\mbf{Y}_1^k) \geq \sum_{i=1}^{k}\gamma_i\right\}\label{eq:tau}\\
    \overline \tau_n & = \inf\left\{ k \geq n :   i(\mbf{X}_1^{n-1} ; \mbf{Y}_1^{n-1})+i(\mbf{\overline{X}}_n^{N} ; \mbf{Y}_n^{N}) \geq \sum_{i=1}^{k}\gamma_i\right\},\label{eq:overlinetau}
  \end{align}
  Then for any tuple $(M_1,\ldots,M_{N})$ there exists an $(l, (M_1,\ldots,M_N),
  \epsilon)$ EMS-SF code such that
  \begin{align}
   l \leq \E{\tau} \quad \text{and} \quad \epsilon \leq
    \E{\sum_{n=1}^{\min(\tau,N)} \tilde M^{\tau}_n\pr{\overline \tau_n
        \leq \tau}}.\label{eq:vlf_eps}
  \end{align}
  where $\tilde M_{k}^{n} =   (M_k - 1)M_{k+1}^{\min(N,n))}$.
  \label{thm:brq_vlf}
\end{theorem}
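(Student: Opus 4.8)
The plan is to mirror the random-coding proof of the VLSF achievability bound in \cite{feedback}, adapting it to the prefix-overlapping (tree) structure of EMS codes. First I would generate the random tree codebook: each block $\mbf{c}_n(j_1,\ldots,j_n)$ is drawn with iid $\text{Bern}(\frac12)$ entries, subject to the sharing constraint that codewords agreeing in $(j_1,\ldots,j_{n-1})$ reuse the same first $n-1$ blocks, so that the ensemble respects the overlapping property. The decoder maintains, for every message prefix injected by the current block, the accumulated information density, and it stops and declares that prefix the first time its density reaches the running threshold $\sum_{i=1}^k\gamma_i$; the ACK emitted at that instant realizes the stop-feedback. Since the decoder halts at the minimum over all candidates of their crossing times while the true message crosses exactly at $\tau$ from \eqref{eq:tau}, the realized decision time $\tau^*$ satisfies $\tau^*\leq\tau$, so $\E{\tau^*}\leq\E{\tau}$ and we may take $l=\E{\tau}$.

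For the error event I would condition on the transmitted codeword and the channel outputs (hence on $\tau$ and on the received blocks) and union-bound over the competing codewords, partitioning them by the block $n$ at which they first diverge from the true codeword. A competitor diverging at block $n$ shares the true first $n-1$ blocks and carries an independent copy $\overline{\mbf{X}}_n^N$ from block $n$ onward, so its accumulated density at time $k$ is $i(\mbf{X}_1^{n-1};\mbf{Y}_1^{n-1})+i(\overline{\mbf{X}}_n^{k};\mbf{Y}_n^{k})$; this is precisely the process whose crossing time is $\overline\tau_n$ in \eqref{eq:overlinetau}, with the shared prefix built into the definition so that its correlation with $\tau$ through the common first $n-1$ blocks is preserved. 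An incorrect decision can occur only if some competitor reaches the threshold no later than the true message, i.e.\ $\overline\tau_n\leq\tau$ for that competitor.

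The remaining step is the combinatorial count of the competitors that are active at the random decision time. Because only the messages $j_1,\ldots,j_{\min(N,\tau)}$ have been injected by time $\tau$, and because the error event concerns only this injected prefix, the competitors that first diverge at block $n\leq\min(N,\tau)$ are those agreeing in $(j_1,\ldots,j_{n-1})$ (forced), differing in $j_n$ ($M_n-1$ choices), and free in the already-injected tail $(j_{n+1},\ldots,j_{\min(N,\tau)})$ ($M_{n+1}^{\min(N,\tau)}$ choices), giving exactly $\tilde M_n^{\tau}=(M_n-1)M_{n+1}^{\min(N,\tau)}$. Applying the union bound with this random count, summing over $n$ from $1$ to $\min(\tau,N)$, and taking the expectation over the transmitted codeword and outputs yields \eqref{eq:vlf_eps}; a concluding average over the random codebook then extracts a deterministic code meeting both bounds.

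The hard part will be making the union bound rigorous with a random competitor count tied to the \emph{same} stopping time $\tau$ that governs the true message. Unlike the fixed-$M$ VLSF argument, the set of active competitors grows with $\tau$, and each competitor's hitting process is correlated with $\tau$ through the shared prefix; justifying that a block-$n$ competitor contributes exactly $\pr{\overline\tau_n\leq\tau}$ (with the prefix correlation absorbed into the definition of $\overline\tau_n$) and that the count $\tilde M_n^{\tau}$ may legitimately be pulled inside the outer expectation requires careful handling of measurability and of the interaction between the random decision time and the tree branching.
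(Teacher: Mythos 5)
Your proposal is correct and follows the paper's proof in Appendix~C essentially step for step: the same tree-structured random codebook, the same first-threshold-crossing decoder (so the decision time is bounded by the true codeword's crossing time $\tau$, giving $l\leq\E{\tau}$), and the same union bound over competitors partitioned by the first block $n$ at which they diverge, with the count $\tilde M_n^{\tau}=(M_n-1)M_{n+1}^{\min(N,\tau)}$ and the identification of each such competitor's crossing time with $\overline\tau_n$. The measurability worry you flag at the end is resolved in the paper exactly as you suggest: the indicator $\mathbbm{1}\{\overline\tau_n\leq\tau\}$ and the random count are kept inside a single expectation over the joint codebook--channel distribution, after which ${\tau^*}'\leq\min(\tau,N)$ yields \eqref{eq:vlf_eps}.
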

\begin{proof}
  See Appendix~C.
\end{proof}
The achievable rate is then computed as $R^*(\epsilon) = \frac{\E{\sum_{n=1}^{\min(\tau,N)}\log M_n}}{\E{\tau}}$.
To enable efficient computation of the bound in Theorem~\ref{thm:brq_vlf}, we
loosen the bound on the error probability \eqref{eq:vlf_eps} as follows
\begin{align}
  \epsilon &\leq \E{\sum_{l=1}^{\tau} \tilde M^{\tau}_l\pr{\overline \tau_l \leq
      \tau}}\\
&= \sum_{n=1}^{\infty}\pr{\tau = n} \sum_{l=1}^{n} \tilde M_l^n \E{\indi{\tau
      \leq
      n}\exp\{ -i(X_l^{n} ; Y_l^{n})\}}\\
  &\leq \sum_{n=1}^{\infty}\pr{\tau = n} \sum_{l=1}^{n} \tilde M_l^n
  \exp\left\{-\sum_{i=l}^n \gamma_i\right\}\label{eq:loosened_epsilon}
\end{align}
\section{Backtrack retransmission}
\label{sec:backtrack}

\begin{algorithm}[!t]
  \caption{BRQ with delayed CSIT at the transmitter.}
  \label{alg:brq_trans}
\begin{algorithmic}[1]
  \STATE \textbf{input}: $M_1$ and target probability of
  error $\epsilon$.
  \STATE \textbf{initialize}: Generate random binary codebook with $M_1$
  codewords $\{\mbf{c}(j_1)\}$. Fetch $\log M_1$ nats, corresponding to the
  message $j_1\in\mathcal{M}_1$, and transmit
  $\mbf{x}=\mbf{c}(j_1)\in\{0,1\}^T$.
  \STATE Receive the channel state $S_{1}$. Terminate if $\epsilon_{S_1}(\{M_1\})<\epsilon$.
  \STATE $k\gets 1$
  \REPEAT
  \STATE $k\gets k+1$
  \STATE Compute
  $\tilde \epsilon_{k}=\epsilon_{S_1,\ldots,S_{k-1},1}(\{M_1,\ldots,M_{k-1},1\})$. 
  \IF[Expand message space]{$\tilde \epsilon_{k}\leq \epsilon$} 
  \STATE Find $M_{k}$ such that $\epsilon_{S_1,\ldots,S_{k-1},1}(\{M_1,\ldots,M_{k}\})=\epsilon$.
  \ELSE[Pure incremental redundancy]
  \STATE $M_{k}\gets 1$
  \ENDIF
  \STATE Expand codebook by a factor of $M_{k}$ by generating $M_{k}$ codewords
  $\{\mathbf{c}_k(j_1,\ldots,j_{k})\}_{j_{k}\in\mathcal{M}_{k}}$
 for each $j_1\in\mathcal{M}_1,\ldots,j_{k-1}\in\mathcal{M}_{k-1}$.
  \STATE Fetch $\log M_{k}$ nats and set the message $j_{k}$
  accordingly. Transmit $\mbf{x}_{k}=\mathbf{c}_k(j_1,\ldots,j_{k})$.
  \STATE Receive the channel state $S_{k}$.
    \UNTIL{$S_k=1$ and $\tilde \epsilon_k \leq \epsilon$}
  \end{algorithmic}
\end{algorithm}

The key idea of BRQ is to reduce the collected amount of wasteful information
density by increasing the number of codewords in the codebook during
transmission. At short blocklengths, this can efficiently be achieved using EMS
codes.  As for VLSF and VLD, we propose two different communication schemes
which are based on delayed CSIT alone and a combination of delayed CSIT and
stop-feedback.
\subsection{BRQ with Delayed CSIT}
The operation of BRQ
is illustrated in Fig.~\ref{fig:brq} and the protocol at the transmitter is
summarized in Algorithm~\ref{alg:brq_trans}. We assume that the transmitter and
receiver have exchanged a seed to generate
common randomness (for codebooks), and $\epsilon_{s_1,\ldots,s_{k}}(\{M_1,\ldots,M_{k}\})$
denotes the achievable error probability, computed by
Theorem~\ref{thm:joint_brq_decode}, of an EMS code with messages
$M_1,\ldots,M_k$ and the state sequence $s_1,\ldots,s_k$ on the block-fading
channel. The transmitter initiates the transmission in block $k=1$ by choosing a
codeword from a random codebook of $M_1$ codewords.
By the end of the $k$-th block, the CSI $s_k$ is
obtained at the transmitter. Based on the CSI $s_1,\ldots,s_k$, the objective of
the transmitter is to ensure that the decoder will not collect wasteful
information density in block $k+1$. Therefore the transmitter computes the
probability of error if the receiver were to decode by the end of block $k+1$ and
the CSI turns out to be $S_{k+1}=1$. This probability of error is given by
$\tilde \epsilon_{k+1}=\epsilon_{s_1,\ldots,s_k,1}(M_1, \ldots,M_k,1)$. If
$\tilde \epsilon_{k+1}<\epsilon$, higher reliability than necessary is achieved
if $S_{k+1}=1$, and the message space is thus expanded by a factor of
$M_{k+1}$. $M_{k+1}$ is computed such that, if $S_{k+1}=1$, then the messages
$M_1,\ldots,M_{k+1}$ can be jointly decoded with a probability of error
$\epsilon$, i.e. $\epsilon_{S_1,\ldots,S_k,1}(M_1,
\ldots,M_k,M_{k+1})=\epsilon$. Otherwise, $M_{k+1}$ is set to $1$, and purely
incremental redundancy is send.  Termination occurs when
$\epsilon_{s_1,\ldots,s_k}(M_1, \ldots,M_k)\leq \epsilon$.  Since the
transmitter only uses delayed CSIT, the transmitter and receiver may generate
the same codebooks using common randomness.
\begin{figure}[!t]
  \centering
  \includegraphics[width=2.3in]{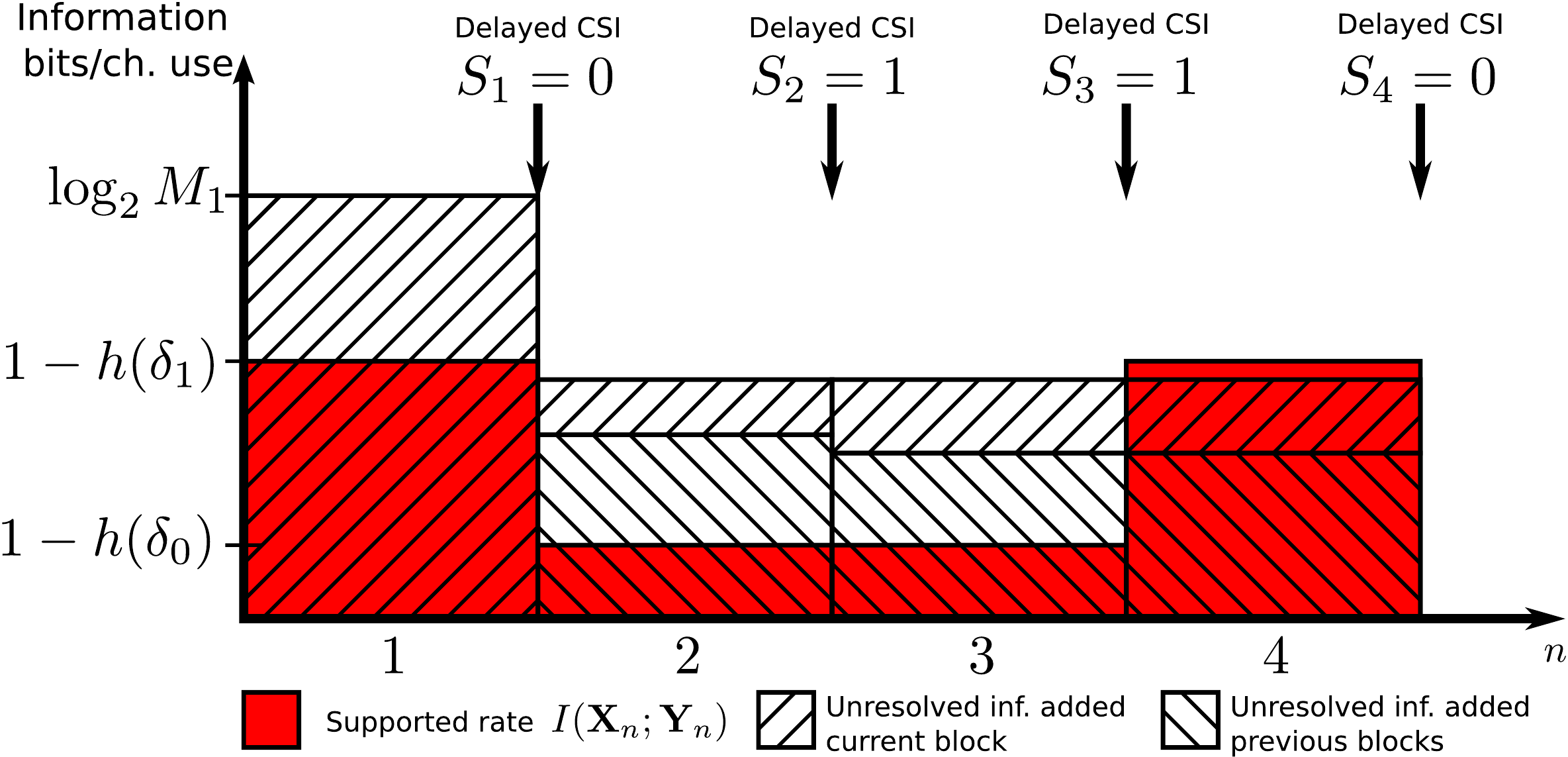}
\caption{Operation of BRQ.}
\label{fig:brq}
\end{figure}

\subsection{BRQ with Delayed CSIT and Stop-Feedback}
When both stop-feedback and delayed CSIT is available at the transmitter, the
proposed BRQ scheme is similar to Algorithm~\ref{alg:brq_trans} but uses an
EMS-SF code. However, for stop-feedback codes, the receiver decides whether to
decode based on its received signal $\mbf{Y}^n$ or, for the codes constructed in
Theorem~\ref{thm:brq_vlf}, when the information density surpasses a
threshold. Wasteful information density is thereby reflected by an overwhelming
probability of decoding in a specific block. With $\tau^{\{M_1,\ldots,M_k\}}$
being the random stopping time of an $((M_1,\ldots,M_k), \epsilon)$ EMS-SF code,
the probability of decoding at the end of block $k$, given NACKs
were received in the first $k-1$ blocks, is denoted by
\begin{align}
& p^{\{M_1,\ldots,M_k\}}_{s_1,\ldots,s_k}=\pr{\tau^{\{M_1,\ldots,M_k\}} = k \big|
  \tau^{\{M_1,\ldots,M_k\}}\geq k\right.\nonumber\\
&  \qquad\qquad\qquad\qquad\qquad \left., (S_1,\ldots,S_k)=(s_1,\ldots,s_k)}
\end{align}
where $s_1,\ldots,s_k$ denotes the state sequence of the block-fading
channel. We introduce an additional parameter $\beta$ which serves as a
threshold for when to expand the message space.

Therefore, the transmitter uses the following algorithm; in slot $k$, if
$p^{\{M_1,\ldots,M_{k-1},1\}}_{s_1,\ldots,s_{k-1},1}>\beta$, the message
space is expanded by a factor of $M_k$ s.t.
$p^{\{M_1,\ldots,M_k\}}_{s_1,\ldots,s_{k-1},1}=\beta$. Otherwise, $M_k$ is set
to $1$. After computation of $M_k$, the threshold value $\gamma_k$ in
Theorem~\ref{thm:brq_vlf} may be computed using \eqref{eq:loosened_epsilon} such
that the receiver chooses to decode when the probability of error is less than
$\epsilon$. Using this transmission protocol, the probability of error never
exceeds $\epsilon$ and the probability of decoding in a specific slot does not
exceed $\beta$.

\section{Numerical Results}\label{sec:results}
\begin{figure}[!t]
\centering
\subfigure[$T=100$]{
  \includegraphics[width=2.7in]{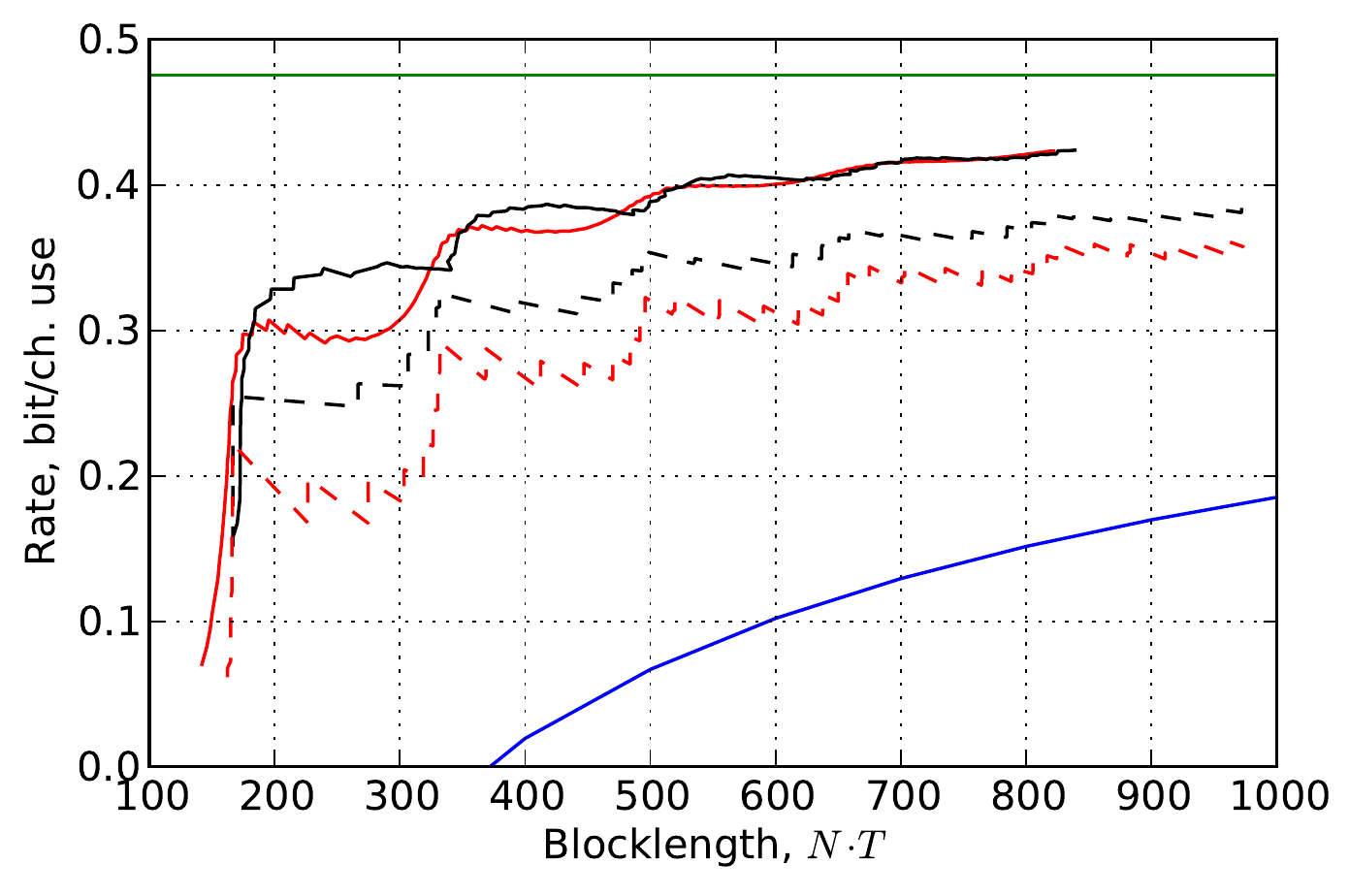}
	\label{fig:results1}
}
\subfigure[$T=200$]{
  \includegraphics[width=2.7in]{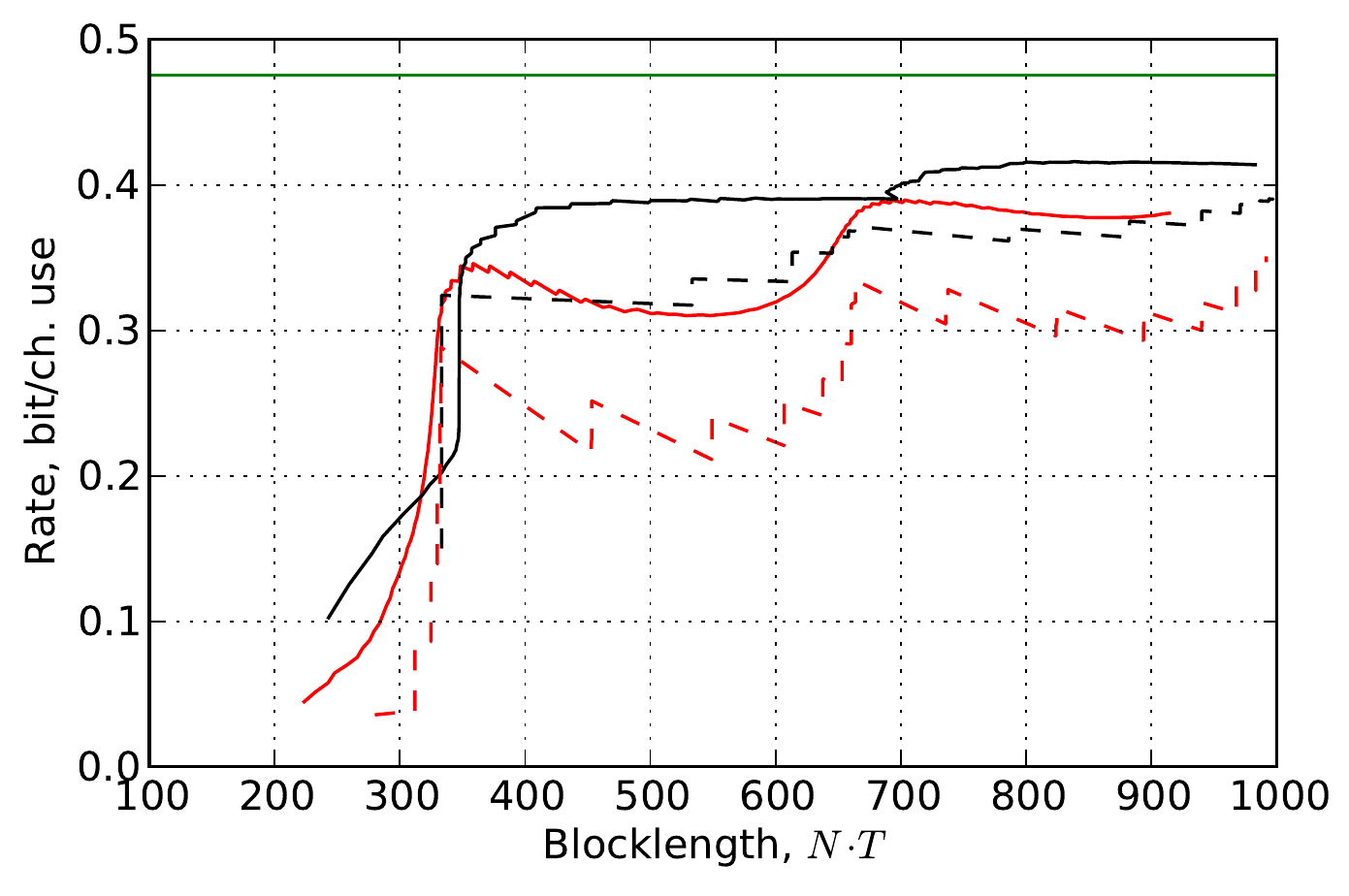}
	\label{fig:results2}
}
\caption{Achievable rates for the block-fading model with the parameters
  $\delta_1=0.05, \delta_0=0.30, q=0.6$ and $\beta=0.9$. Green: capacity, black (solid):
BRQ with delayed CSIT and stop-feedback, black (dashed): BRQ with delayed CSIT,
red (solid): VLSF scheme, red (dashed): VLD scheme, blue: finite-length block
codes (not in plot for $T=200$). }
\label{fig:result}
\end{figure}
To assess the performance of the proposed communication schemes, the achievable
rates of fixed-length block codes, the VLSF scheme, the BRQ schemes are
computed.

The achievable rates of fixed-length block codes are computed using the normal
approximation in \eqref{eq:normal_approximation}. For the remaining schemes, the
achievable rates are computed using Theorem~\ref{thm:joint_brq_decode} and
Theorem~\ref{thm:brq_vlf} and by averaging over all fading realizations for a range of
$M_1$ values. To reduce the computational complexity of averaging, we restrict
the number of message space expansions to $5$. The achievable rates are computed
using the parameters $\delta_1=0.05$, $\delta_0=0.3$, $q=0.6$ and
$\beta=0.9$. Computed achievable rates are shown in Fig.~\ref{fig:result}. 

Observe that schemes based on variable-length coding in general outperforms the
fixed-length block codes. Moreover, for the VLSF and VLD schemes, we see that
periodic decoding implies that the achievable rates have decreases in some
ranges of blocklengths which becomes more pronounced with higher coherence time
$T$. In these ranges of blocklengths, the BRQ schemes
achieve higher rates. Note that optimization over $\beta$ may yield better rates
for BRQ with delayed CSIT and stop-feedback.

\section{Discussion and Conclusions}
\label{sec:conclusions}
In this paper, we considered binary block-fading fading channel with two
states. A family of codes, EMS codes, that allows the message space to expand
during transmission was introduced and we provided bounds on the probability of
error. Using these codes, we proposed two transmission schemes based the
backtrack retransmission scheme. Numerical results showed that the proposed
communication schemes achieve better rates, for the specific parameters, than
communication schemes that do not benefit from delayed CSIT.





\bibliography{isit}

\begin{thebibliography}{1}
\providecommand{\url}[1]{#1}
\csname url@samestyle\endcsname
\providecommand{\newblock}{\relax}
\providecommand{\bibinfo}[2]{#2}
\providecommand{\BIBentrySTDinterwordspacing}{\spaceskip=0pt\relax}
\providecommand{\BIBentryALTinterwordstretchfactor}{4}
\providecommand{\BIBentryALTinterwordspacing}{\spaceskip=\fontdimen2\font plus
\BIBentryALTinterwordstretchfactor\fontdimen3\font minus
  \fontdimen4\font\relax}
\providecommand{\BIBforeignlanguage}[2]{{%
\expandafter\ifx\csname l@#1\endcsname\relax
\typeout{** WARNING: IEEEtran.bst: No hyphenation pattern has been}%
\typeout{** loaded for the language `#1'. Using the pattern for}%
\typeout{** the default language instead.}%
\else
\language=\csname l@#1\endcsname
\fi
#2}}
\providecommand{\BIBdecl}{\relax}
\BIBdecl

\bibitem{finite_rate}
Y.~Polyanskiy, H.~V. Poor, and S.~Verdu, ``Channel coding rate in the finite
  blocklength regime,'' \emph{IEEE Trans. Inform. Theory}, vol.~56, no.~5, pp.
  2307--2359, 2010.

\bibitem{feedback}
------, ``Feedback in the non-asymptotic regime,'' \emph{IEEE Trans. Inform.
  Theory}, vol.~57, no.~8, pp. 4903--4925, 2011.

\bibitem{quasi_static}
W.~Yang, G.~Durisi, T.~Koch, and Y.~Polyanskiy, ``Quasi-static simo fading
  channels at finite blocklength,'' in \emph{IEEE International Symposium on
  Information Theory Proceedings}, 2013, pp. 1531--1535.

\bibitem{scalar_dispersion}
Y.~Polyanskiy and S.~Verdu, ``Scalar coherent fading channel: Dispersion
  analysis,'' in \emph{IEEE International Symposium on Information Theory
  Proceedings}, 2011, pp. 2959--2963.

\bibitem{block_fading}
W.~Yang, G.~Durisi, T.~Koch, and Y.~Polyanskiy, ``Block-fading channels at
  finite blocklength,'' in \emph{Proceedings of the Tenth International
  Symposium on Wireless Communication Systems}, 2013, pp. 1--4.

\bibitem{backtrack}
P.~Popovski, ``Delayed channel state information: Incremental redundancy with
  backtrack retransmission,'' in \emph{IEEE International Communication
  Conference}, Jun. 2014, accepted for publication.

\bibitem{trill_arxiv}
K.~F. Trillingsgaard and P.~Popovski, ``Block-fading channels with delayed csit
  at finite blocklength,'' 2014, available on arXiv.

\end{thebibliography}

\bibliographystyle{IEEEtran}

\appendices
\section{Proof of Theorem 1}
\begin{proof}
  The proof is based on the proof of the DT-bound in \cite{finite_rate}.

  \textit{Codebook generation}: Generate the random codebook according to the
  following algorithm:
  \begin{enumerate}
  \item Let $n=1$. Generate $M_1$ codewords $\mbf{c}(j_1)\in\mathcal{X}$ for
    $j_1\in\mathcal{M}_1$ according to the distribution $P_{\mbf{X}_1}$.
  \item For each tuple $(j_1,\ldots,j_n)\in(\mathcal{M}_1,\ldots,\mathcal{M}_n)$,
    generate $M_{n+1}$ codewords
    $\mbf{c}_{n+1}(j_1,\ldots,j_{n+1})\in\mathcal{X}$ according to the
    distribution $P_{\mbf{X}_{n+1}}$.
  \item Let $n=n+1$. If $n<N$, goto step 2, otherwise stop.
  \end{enumerate}

  \textit{Encoder}: The transmitter maps the message
  $(j_1, \ldots,j_N)\in(\mathcal{M}_1,\ldots,\mathcal{M}_L)$ to the codeword
  $\mbf{c}(j_1,\ldots,j_N)$ which is transmitted.

\textit{Decoder}: The decoder uses the Feinstein suboptimal decoder
\cite{finite_rate}. Let $\{Z_{\mbf{x}^N}\}$, $\mbf{x}^N\in\mathcal{X}^{N}$, be a
collection of functions defined as
\begin{align}
  Z_{\mbf{x}^N}(\mbf{y}^N) = \indi{i(\mbf{x}^N ; \mbf{y}^N) > \log \frac{M_1^N-1}{2}}.
\end{align}
These functions are likelihood ratio hypothesis tests.  The decoder runs through
all $\prod_{n=1}^N M_n$ codewords and performs likelihood ratio hypothesis
tests, and the codeword corresponding to the lowest index such that
$Z_{\mbf{x}^N}(\mbf{y}^N)=1$ is output.

An ordering of the codewords $\{\mbf{c}(j_1,\ldots,j_N)\}$ is defined such that
$\mbf{c}_j=\mbf{c}(j_1,\ldots,j_N)$ if and only if 
\begin{align}
j_1+\sum_{n=2}^N M_1^{n-1} (j_n-1)=j,
\end{align}
for $j\in\{1,\ldots,M_1^N\}$. This ordering corresponds to the codebook shown in
Fig.~\ref{fig:codelist}.  The conditional probability of error given that the
$j$-th codeword was sent is
\begin{align}
  &\pr{ \{ Z_{\mbf{c}_j}(\mbf{Y}^N)=0 \} \cup\bigcup_{i<j} \{ Z_{\mbf{c}_i}(\mbf{Y}^N)=1 \} \big| \mbf{X}^N =
    \mbf{c}_j}\\
  &\quad = \pr{ i(\mbf{c}_j ; \mbf{Y}^N) \leq \log \frac{M_1^N-1}{2} \big| \mbf{X}^N =
    \mbf{c}_j}\nonumber\\
  &\qquad + \sum_{i<j}\pr{ i(\mbf{c}_i ;\mbf{Y}^N) > \log \frac{M_1^N-1}{2} \big| \mbf{X}^N =
    \mbf{c}_j}. \label{eq:prob_error1}
\end{align}
By symmetry in the codebook,  \eqref{eq:prob_error1} can be written as
\begin{align}
  &\leq  \pr{ i(\mbf{X}^N ; \mbf{Y}^N) \leq \log \frac{M_1^N-1}{2} }\nonumber\\
  &\quad+ \sum_{n=0}^{N-1} j'_{n} \pr{ i(\mbf{X}^N;\mbf{Y}_1^n, \overline{\mbf{Y}}_{n+1}^N) > \log \frac{M_1^N-1}{2} }\label{eq:prob_error2}
\end{align}
where
\begin{align}
  j'_{N-1}&=\left\lfloor \frac{j-1}{M_1^{N-1}} \right \rfloor \\
  j'_{n} &= \left\lfloor \frac{j-1}{M_1^n} \right
  \rfloor-\left\lfloor \frac{j-1}{M_1^{n+1}} \right
  \rfloor  \text{ for } n\in\{1,\ldots,N-2\}\\
  j'_0&=j-1 - \left\lfloor \frac{j-1}{M_1} \right
  \rfloor
\end{align}
Intuitively, $j'_l$ describes the number of codewords among the indices
$\{1,\ldots, j-1\}$ that shares the first $n$ blocks with the transmitted
codeword $\mbf{c}_j$.

Note that we have
\begin{align}
  \frac{1}{M_1^N} \sum_{j=1}^{M_1^N} \left\lfloor \frac{j-1}{M_1^n}
  \right \rfloor&=\frac{1}{M_{n+1}^N}\sum_{k=1}^{M_{n+1}^N}(k-1)\\
  &= \frac{M_{n+1}^N-1}{2}\label{eq:prf_11}
\end{align}
and
\begin{align}
\frac{M_{n+1}-1}{2} - \frac{
  M_{n+2}^N-1}{2}&=\frac{M_{n+2}^N \left(M_{n+1}-1  \right)}{2}.\label{eq:prf_12}
\end{align}
Thus by averaging over \eqref{eq:prob_error2} with respect to $j$ and by using
\eqref{eq:prf_11} and \eqref{eq:prf_12}, we obtain the upper bound in \eqref{eq:thm1_error_bound} on the probability
of error.
\end{proof}

\section{Proof of Proposition~1}
\begin{proof}
  Note that we can write \eqref{eq:thm1_error_bound} as
\begin{align}
&\epsilon\leq  \sum_{n=1}^{N} \frac{M_{n+1}^N (M_{n}-1)}{M_1^N-1}
\nonumber\\
&\qquad \E{\left( \pr{i(\mbf{X}_{1}^N;\mbf{Y}_{1}^N) <
      \log \frac{M_1^N-1}{2}}\right.\right.\nonumber\\ &
  \qquad\quad\left.\left.+\frac{M_1^N-1}{2}\pr{i(\mbf{X}_1^{n-1};\mbf{Y}_1^{n-1}) +
        i(\mbf{X}_{n}^N ;
        \overline{\mbf{Y}}_{n}^N)\right.\right.\right.\nonumber\\
  &\qquad\qquad\qquad\qquad\qquad\qquad\left.\left.\left.\geq \log \frac{M_1^N-1}{2}}
  \right)}.\label{eq:prop1_rewrite}
\end{align}
Using the identity
\begin{align}
  e^{\rho} \exp\left\{ - \left| \log \frac{z e^{\rho}}{\gamma }
    \right|^+\right\}= \mathbbm{1}\left(z \leq \frac{\gamma}{e^{\rho}}\right) + \frac{ \gamma}{z}
  \mathbbm{1}\left(z > \frac{\gamma}{e^{\rho}}\right),\label{eq:prop1_rewrite2}
\end{align}
by setting
$z=\frac{dP_{\mbf{X}_{n}\ldots,\mbf{X}_{N},\mbf{Y}_{n},\ldots,\mbf{Y}_{N}}}{d(P_{\mbf{X}_{n}\ldots,\mbf{X}_{N}}\times
  P_{\mbf{Y}_{n},\ldots,\mbf{Y}_{N}})}$ and by averaging both sides of
\eqref{eq:prop1_rewrite2} with respect to the joint pmf
$P_{\mbf{X}_{n}\ldots,\mbf{X}_{N},\mbf{Y}_n,\ldots,\mbf{Y}_{N}}$, we obtain
\begin{align}
  & \E{\exp\left\{\rho - \left| i(\mbf{X}_{n}^N;\mbf{Y}_{n}^N)+ \rho - \log \gamma
    \right|^+\right\}}\nonumber \\
  &\qquad= \pr{i(\mbf{X}_{n}^N; \mbf{Y}_{n}^N) + \rho \leq \log \gamma} \nonumber\\
  &\qquad\quad+ \gamma
  \pr{i(\mbf{X}_{n}^N ; \overline{\mbf{Y}}^N_{n}) + \rho > \log \gamma }.\label{eq:avg_identity}
\end{align}
Substituting \eqref{eq:avg_identity} into \eqref{eq:prop1_rewrite}
yields the bound in \eqref{eq:prop1_statement}.
\end{proof}

\section{Proof of Theorem 2}
\begin{proof}
  The proof is similar to the proof of Theorem~3 in \cite{feedback}.

  A codebook, shared by the transmitter and receiver, with $M_1^N$ infinite
  dimensional codewords drawn from the distribution $P_{\mathbf{X}_n}$  such that
  \begin{align}
    \mbf{c}_n(j_1,\ldots,j_N) &\in \mathcal{X}
  \end{align}
  for $(j_1,\ldots,j_{N})\in(\mathcal{M}_1, \ldots,
  \mathcal{M}_{N})$ and $n\in\mathbb{N}$. Additionally, the codebook has the following property
  \begin{align}
    \mbf{c}_n(j_1,\ldots,j_{n}) &= \mbf{c}_n(j_1,\ldots,j_{N})\in\mathcal{X}
  \end{align}
   and $n\in\{1,\ldots,N\}$. As for the EMS codes, this property implies that
   \begin{align}
     \mbf{c}_n(j_1,\ldots,j_k,j_{k+1},\ldots,j_{N})=\mbf{c}_n(j_1,\ldots,j_k,j'_{k+1},\ldots,j'_{N})
   \end{align}
   for $k\in\{1,\ldots,N-1\}$, $(j_{k+1}',\ldots,j_N')\in(\mathcal{M}_{k+1},\ldots,\mathcal{M}_N)$ and  $n\in\{1,\ldots,k\}$.

   The $(l, (M_1,\ldots,M_N),\epsilon)$ EMS-SF code is defined by a sequence of
   encoders $\mbf{c}^{SF}_n: \mathcal{M}_1\times \ldots, \mathcal{M}_{n'}
   \rightarrow \mathcal{X}$ that maps the messages $j_1, \ldots, j_{n'}$ to the
   channel input $\mbf{c}_n(j_1,\ldots,j_{n'})$, where  $n'=\min(n,N)$.  By the end of the $n$-th
   block, the decoder computes the $M_1^{n'}$ information densities
  \begin{align}
    S_n(j_1,\ldots,j_{n'}) 
    &= \sum_{k=1}^n i(\mbf{c}^{SF}_k(j_1,\ldots,j_{n'}) ; \mbf{Y}_k )
  \end{align}
  for $(j_1,\ldots,j_{n'})\in(\mathcal{M}_1,\ldots,\mathcal{M}_{n'})$. 

  Define the events $E_n(j_1,\ldots,j_{n'}) =\left\{ S_n(j_1,\ldots,j_{n'})\geq
    \sum_{i=1}^{n'}\gamma_i\right\}$ and the stopping times
  \begin{align}
    \tau_{j_1,\ldots,j_{n}} &= \inf\left\{ k \geq 1, \exists
      (j_{n+1},\ldots,j_{N})\in(\mathcal{M}_{n+1},\ldots,\mathcal{M}_N)\right.\nonumber\\
      &\qquad\qquad\qquad \left.: E_k(j_1,\ldots,j_n) \right\}
  \end{align}
for $n<N$ and otherwise   
\begin{align}
\tau_{j_1,\ldots,j_N} = \inf\left\{ k \geq 1: E_k(j_1,\ldots,j_N) \right\}.
  \end{align}
  We define the moment of the first upcrossing as
  \begin{align}
    \tau^* = \min \{n\geq 1: \exists j_1,\ldots,j_{n'}, E_n(j_1,\ldots,j_{n'}) \},
  \end{align}
  and the decoder is given by
  \begin{align}
    &\mbf{g}^{SF}_{\tau^*}(\mbf{Y}^{\tau^*}) \nonumber\\
    &=\max\{(j_1,\ldots,j_{{\tau^*}'})\in(\mathcal{M}_1,\ldots,\mathcal{M}_{{\tau^*}'})\nonumber\\
    &\qquad\qquad : \tau_{j_1,\ldots,j_{{\tau^*}'}} = \tau^*)\},\label{eq:decoder}
  \end{align}
  where ${\tau^*}'=\min(\tau^*,N)$ and $\max(\cdot)$ returns the tuple attaining
  the maximum in lexicographical order.
  The average transmission length is bounded as 
  \begin{align}
    \E{\tau^*} &\leq \sum_{n=1}^{\infty}\frac{1}{M_1^{n'}}
    \sum_{\stackrel{(j_1,\ldots,j_{n'})\in}{(\mathcal{M}_1,\ldots,\mathcal{M}_{n'})}}
    \pr{\tau_{j_1,\ldots,j_{n'}}=n|J_1=j_1,\ldots,J_{n'}=j_{n'}}\\
    &= \sum_{n=1}^{\infty} \pr{\tau_{1,\ldots,1}=n|J_1=1,\ldots,J_{n'}=1}\\
    &=\E{\tau}.\label{eq:proof_thm2}
  \end{align}
  where \eqref{eq:proof_thm2} follows from the definition of $\tau$ in
  \eqref{eq:tau}.  Finally, the probability of error is bounded as
  following
  \begin{align}
    &    \pr{\mbf{g}^{SF}_{\tau^*}(\mbf{Y}^{\tau^*})\not= (J_1,\ldots,J_{{\tau^*}'}) }\\
    & \quad=\EE{\tau^*}{ \pr{\mbf{g}^{SF}_{\tau^*}(\mbf{Y}^{\tau^*})\not=
        (J_1,\ldots,J_{{\tau^*}'})| \tau^* }}\\
    & \quad\leq\EE{\tau^*}{\pr{\mbf{g}^{SF}_{\tau^*}(\mbf{Y}^{\tau^*}) \not=
        (\underbrace{1,\ldots,1}_{ {\tau^{*}}' \text{ times}}) |
        J_1=1,\ldots,J_{{\tau^*}'}=1, \tau^*}}\label{eq:step1}\\
    &\quad \leq \EE{\tau^*}{\pr{ \tau_{\underbrace{1,\ldots,1}_{{\tau^*}'\text{ times}}} \geq \tau^*|J_1=1,\ldots,J_{{\tau^*}'}=1  ,\tau^*}}\label{eq:step2}\\
    &\quad\leq \E{
      \sum_{\stackrel{j_1\in\mathcal{M}_1,\ldots,j_{{\tau^*}'}\in\mathcal{M}_{{\tau^*}'}}{(j_1,\ldots,j_{{\tau^*}'})\not=(1,\ldots,1)
        }} \mathbbm{1}\{
      \tau_{j_1,\ldots,j_{{\tau^*}'}}\leq \tau_{1,\ldots,1}  \} \right.\nonumber\\
&\qquad\qquad\qquad\qquad\qquad\qquad \left. |J_1=1,\ldots,J_{{\tau^*}'}=1} \label{eq:step3} \\
    &\quad= \E{ \sum_{n=1}^{{\tau^*}'}
      \sum_{\stackrel{\stackrel{(j_1,\ldots,j_{n-1})=(1,\ldots,1),}{j_{n}\in\mathcal{M}_n,\ldots,j_{{\tau^*}'}\in\mathcal{M}_{{\tau^*}'}},}{
          j_n\not=1}} \mathbbm{1}\{
      \overline{\tau}_n\leq \tau\}}\label{eq:step4}\\
    &\quad= \E{ \sum_{n=1}^{{\tau^*}'}
      (M_{n}-1) M_{n+1}^{{\tau^*}'} \mathbbm{1}\{
      \overline \tau_n \leq \tau \}},
  \end{align}
  where \eqref{eq:step1} follows from \eqref{eq:decoder}, \eqref{eq:step3} from
  the union bound and \eqref{eq:step4} from the definition of $\overline \tau_n$ in
  \eqref{eq:overlinetau}. Lastly, \eqref{eq:vlf_eps} follows from the fact that
  ${\tau^*}'\leq \min(\tau,N)$, which completes the proof.
\end{proof}

\end{document}